\documentclass[twocolumn,aps,pra,notitlepage,superscriptaddress,longbibliography]{revtex4-1}
\usepackage[colorlinks, linkcolor = ForestGreen, anchorcolor = blue, citecolor = magenta]{hyperref}
\usepackage{amsmath,amssymb,amsthm,mathtools,mathrsfs}
\usepackage[table,dvipsnames]{xcolor}
\usepackage{tikz}
\usepackage{adjustbox}
\usepackage{dsfont}
\usepackage[normalem]{ulem}
\usepackage{enumerate}
\usepackage[all]{xy} 
\usepackage{natbib}
\usepackage{babel}
\usepackage{physics}

\theoremstyle{plain}
\newtheorem{theorem}{Theorem}
\newtheorem{lemma}{Lemma}
\newtheorem{proposition}{Proposition}

\theoremstyle{definition}
\newtheorem{definition}{Definition}

\newtheorem{example}{Example}

\newcommand{\bd}{\begin{definition}}
\newcommand{\ed}{\end{definition}}
\newcommand{\bt}{\begin{theorem}}
\newcommand{\et}{\end{theorem}}
\newcommand{\bn}{\begin{proposition}}
\newcommand{\en}{\end{proposition}}
\newcommand{\be}{\begin{equation}}
\newcommand{\ee}{\end{equation}}
\newcommand{\blem}{\begin{lemma}}
\newcommand{\elem}{\end{lemma}}
\newcommand{\bx}{\begin{example}}
\newcommand{\ex}{\end{example}}
\newcommand{\bprf}{\begin{proof}}
\newcommand{\eprf}{\end{proof}}

\DeclareMathAlphabet{\mathpzc}{OT1}{pzc}{m}{it} 
 \DeclareFontFamily{OT1}{pzc}{}
 \DeclareFontShape{OT1}{pzc}{m}{it}{ <-> s*[1.2] pzcmi7t }{}
 \DeclareMathAlphabet{\mathpzc}{OT1}{pzc}{m}{it}
 
\def\C{{{\mathbb C}}}
\newcommand{\<}{\langle}
\renewcommand{\>}{\rangle}
\newcommand{\id}{\operatorname{id}}

\def\N{\mathcal{N}}
\def\E{\mathcal{E}}
\def\H{\mathcal{H}}
\def\F{\mathcal{F}}

\def\P{\mathcal{P}}
\def\Q{\mathcal{Q}}
\def\J{\mathscr{J}}
\def\C{\mathbb{C}}

\begin{document}									
\preprint{APS/123-QED}

\title{Operational time-reversal symmetry for unital qubit channels}
\author{Ouyang Ting}
\author{James Fullwood}
\email{fullwood@hainanu.edu.cn}
\author{Zhen Wu}
\email{wzmath@hainanu.edu.cn}
\affiliation{School of Mathematics and Statistics, Hainan University, Haikou, Hainan Province, 570228, China}

\date{\today}

\begin{abstract}
The Bayesian inverse of a quantum channel $\E$ is a channel $\F$ in the reverse direction of $\E$ that yields time-symmetric correlations for sequential measurements performed on open quantum systems. Such an operational form of time-reversal symmetry for open quantum systems is quite remarkable, as the dynamics of open quantum systems are inherently irreversible due to system-environment interactions. Similar to the Petz map, a Bayesian inverse $\F$ is defined with respect to a fiducial reference state $\rho$ for the channel $\E$. However, Bayesian inverses do not always exist, and it is often a non-trivial task to determine the set of states $\rho$ for which a Bayesian inverse of $\E$ exists. In this work, we solve the general problem of quantum Bayesian inversion for unital channels acting on a single qubit. Our analysis is streamlined by demonstrating that finding a Bayesian inverse for a unital qubit channel may be reduced to finding a Bayesian inverse of a Pauli channel, which is simply a mixture of unitary channels associated with the Pauli matrices. As such, we provide a complete description of when operational time-reversal symmetry is attainable for sequential measurements of a single qubit in the presence of unital noise.
\end{abstract}

	\maketitle

\section{Introduction}

Time-reversal symmetry is a characteristic feature of unitary evolution, which models the dynamics of closed quantum systems. In the real world however virtually all quantum systems are open systems, as they are prone to interaction with their environment. The dynamics of open quantum systems are modeled in quantum theory by quantum channels, which are completely positive trace-preserving (CPTP) maps which are inherently irreversible due to noise. Consequently, time-reversal symmetry is an idealization which almost never holds in practice. 

While many take the viewpoint that there should be no privileged role of measurement in quantum theory, it turns out that the act of measurement is the only way in which one may extract information from a quantum system. Thus quantum theory without measurement is a physical theory without information. As such, when it comes to time-reversal symmetry in the presence of noise, perhaps it is more reasonable to seek an operational form of time-reversal symmetry where the order in which measurements are performed has no bearing on the resulting temporal correlations. 

Sequential measurements of classical random variables satisfy such a form of time-reversal symmetry, which is guaranteed by Bayes' rule. In particular, given random variables $X$ and $Y$, it follows from the Bayes' rule $\mathbb{P}(x)\mathbb{P}(y|x)=\mathbb{P}(y)\mathbb{P}(x|y)$ that $E(X,Y)=E(Y,X)$, where $E(X,Y)$ is the expectation value of the product of a sequential measurement of $X$ followed by $Y$, and similarly for $E(Y,X)$. The quantum Bayes' rule as first introduced in Ref.~\cite{FuPa22a} then restores such an operational form of time-reversal symmetry for sequential measurements of systems of qubits. 

In particular, for a sequential measurement of general Pauli observables $\sigma_{A}$ followed by $\sigma_{B}$ on a system of qubits initially prepared in state $\rho$, we assume that the system evolves according to a quantum channel $\E$ modeling the system-environment interaction between measurements. The quantum Bayes' rule then determines a quantum channel $\F$ such that $\<\sigma_{A},\sigma_{B}\>_{(\rho,\E)}=\<\sigma_{B},\sigma_{A}\>_{(\E(\rho),\F)}$, where $\<\sigma_{A},\sigma_{B}\>_{(\rho,\E)}$ is the expectation value of the product of a sequential measurement of $\sigma_{A}$ followed by $\sigma_{B}$ with respect to the pair $(\rho,\E)$ representing the initial state and the dynamics between measurements, and similarly for $\<\sigma_{B},\sigma_{A}\>_{(\E(\rho),\F)}$. In such a case, the channel $\F$ is said to be the \emph{Bayesian inverse} of $\E$ with respect to the initial state $\rho$.

The dependence of a Bayesian inverse $\F$ on a reference state $\rho$ is formally reminiscent of the Petz recovery map~\cite{Pe88}, which is also defined with respect to a reference state $\rho$. The Petz map is a cornerstone of quantum Bayesian inference, serving as a canonical reversal for quantum channels in the contexts of retrodiction and state recovery~\cite{BuSc21,PaBu22,Aw_2023}. However, while the Petz map is always well-defined for any state-channel pair $(\rho,\E)$, Bayesian inverses do not always exist. Moreover, determining whether or not a Bayesian inverse exists for a given pair $(\rho,\E)$ is often a non-trivial problem. While an in-depth study of Bayesian inverses of amplitude damping channels appears in Ref.~\cite{FuPa24a}, a systemic study of the existence and uniqueness of Bayesian inverses is lacking. 

In this work we address the existence of Bayesian inverses for the case of unital channels acting on a single qubit. Unital channels send the maximally mixed state to the maximally state, and they are often referred to as \emph{bistochastic} channels due to such a symmetry. For $(\rho,\E)$ with $\rho$ the maximally mixed state and $\E$ a unital channel it is known that a Bayesian inverse always exists, and is equal to the Hilbert-Schmidt adjoint $\E^{\dag}$. However, when $\rho$ is not maximally mixed the question of existence of Bayesian inverses for unital channels remains largely unexplored. 

The present study then addresses this gap by characterizing the existence and uniqueness of Bayesian inverses for unital qubit channels when the reference state is not maximally mixed. By leveraging the unitary equivalence between unital qubit channels and Pauli channels, we provide a complete geometric description of the set of states for which a unital qubit channel admits a Bayesian inverse. Our analysis reveals that for a broad class of these channels, the existence of an inverse is restricted to a measure-zero subset of the Bloch ball. This implies that operational time-reversal symmetry is an exceptional property for open system dynamics, occurring only under highly specific state-channel configurations. Such results thus establish precise criteria for identifying operational time-reversal symmetry in unital qubit dynamics, while offering insight into the inherent sensitivity of Bayesian inversion to perturbations of the reference state.

\section{Preliminaries}

We consider a qubit initially prepared in state $\rho$ which is to be measured at times $t_A$ and $t_B>t_A$. We associate Hilbert spaces $\H_A\cong \C^2$ and $\H_B\cong \C^2$ with each measurement, and for $X\in \{A,B\}$, we let $\mathcal{L}(\H_X)$ denote the algebra of linear operators on $\H_X$. Due to system-environment interactions, we assume that the qubit dynamically evolves according to a quantum channel $\E:\mathcal{L}(\H_A)\to \mathcal{L}(\H_B)$ between measurements, which is a completely positive, trace-preserving (CPTP) linear map. If Pauli observables $\sigma_A$ and $\sigma_B$ are measured at times $t_A$ and $t_B$, respectively, then the \emph{two-time expectation value} $\<\sigma_A,\sigma_B\>_{(\rho,\E)}$ of the product of the measurements is given by
\[
\<\sigma_A,\sigma_B\>_{(\rho,\E)}=\Tr(\E(\Pi_{A}^{+}\rho\Pi_{A}^{+})\sigma_B)-\Tr(\E(\Pi_{A}^{-}\rho\Pi_{A}^{-})\sigma_B)\, ,
\] 
where $\Pi_{A}^{\pm}$ are the orthogonal projectors onto $\pm 1$-eigenspaces of $\sigma_A$. We note that the appearance of the subscript $(\rho,\E)$ on the two-time expectation value $\<\sigma_A,\sigma_B\>_{(\rho,\E)}$ is to emphasize its dependence on the initial state and the channel governing the dynamics between measurements.

To formulate a notion of operational time-reversal symmetry for such a sequential measurement scenario we utilize the Jomio\l kowski isomorphism~\cite{Ja72}, which maps a quantum channel $\N:\mathcal{L}(\H_A)\to \mathcal{L}(\H_B)$ to the bipartite operator $\J[\N]$ given by
\[
\J[\N]=(\id_A\otimes\, \N)({\tt SWAP})\, ,
\]  
where ${\tt SWAP}=\sum_{i,j}\dyad{i}{j}\otimes \dyad{j}{i}$. We note that $\J[\N]$ is the partial transpose of the Choi matrix $\mathscr{C}[\N]$, and is therefore not positive in general. A quantum channel $\F:\mathcal{L}(\H_B)\to \mathcal{L}(\H_A)$ in the reverse direction as $\E$ is then said to be a \emph{Bayesain inverse} of $\E$ with respect to the initial state $\rho$ if and only if
 \be \label{eq:BA}
\big\{\mathcal{E}(\rho)\otimes \mathds{1},\mathscr{J}[\mathcal{F}]\big\}=\left\{\mathds{1}\otimes\rho\, ,\mathscr{J}[\mathcal{E}^\dagger]\right\} \, ,
 \ee
 where $\{\cdot,\cdot\}$ denotes the anti-commutator, and the channel $\E^{\dag}:\mathcal{L}(\H_B)\to \mathcal{L}(\H_A)$ is the Hilbert-Schmidt adjoint of $\E$. Eq.~\eqref{eq:BA} will then be referred to as the \emph{quantum Bayes' rule} associated such a sequential measurement scenario. We note that when $\rho$ is of full rank and a Bayesian inverse of $\E$ with respect to $\rho$ exists, it is necessarily unique.
 
While the quantum Bayes' rule \eqref{eq:BA} may initially appear unfamiliar, it emerges naturally from the spatiotemporal viewpoint of quantum information as provided by the mathematical formalism of  pseudo-density matrices and quantum states over time~\cite{FJV15,Liu_2024,HHPBS17,FuPa22,FuPa22a,LiNg23}. In particular, multiplying both sides of the quantum Bayes' rule~\eqref{eq:BA} by $1/2$ yields
 \[
 \F\star \E(\rho)={\tt SWAP}(\E\star \rho){\tt SWAP}\, ,
 \]  
 where the $\star$-product maps a channel $\N$ with initial state $\omega$ to $\N\star \omega=\{\omega\otimes \mathds{1}\, , \J[\N]\}/2$, which is the pseudo-density matrix associated with the dynamics $(\omega,\N)$. The pseudo-density matrix $\N\star \omega$ encodes two-time expectation values in the same manner that a bi-partite density matrix encodes expectation values of measurements of local observables performed on spacelike separated systems. In particular, for Pauli observables $\sigma_i$ and $\sigma_j$ measured at times $t_i$ and $t_j>t_i$, it is straightforward to show
\be \label{2TXEXP67}
\<\sigma_i,\sigma_j\>_{(\omega,\,\N)}=\Tr(\N\star \omega(\sigma_i\otimes \sigma_j))\, ,
\ee
which allows for the treatment of temporal correlations with the same mathematical tools as spatial entanglement. However, while the pseudo-density matrix $\N\star \omega$ is Hermitian and of unit trace it is not necessarily positive, as the negative eigenvalues of a pseudo-density matrix act as a witness to correlations which may not be realized by spacelike separated systems.
 
If $\F:\mathcal{L}(\H_B)\to \mathcal{L}(\H_A)$ is a Bayesian inverse of $\E$ with respect to $\rho$, then it follows from Eq.~\eqref{2TXEXP67} that
\[
\<\sigma_{A},\sigma_{B}\>_{(\rho,\E)}=\<\sigma_{B},\sigma_{A}\>_{(\E(\rho),\F)}\, ,
\]
which we view as an operational form of time-reversal symmetry for sequential measurements. We note that such a time-reversal symmetry is quite remarkable, especially in light of the fact that there are system-environment interactions between measurements. 

In this work, we restrict to the case when $\E$ is a unital channel (i.e., when $\E(\mu)=\mu$, where $\mu=\mathds{1}/2$ is the maximally mixed state of a single qubit), and we perform a systematic analysis of when a Bayesian inverse of $\E$ with respect to $\rho$ exists. Of course the existence of such a Bayesian inverse will depend on properties of both $\E$ and the state $\rho$. 

If $\E$ is in fact a unitary channel, so that $\E(\omega)=U\omega U^{\dag}$ for all $\omega \in \mathcal{L}(\H_A)$ for some unitary operator $U$, then it is straightforward to show that the Bayesain inverse of $\E$ exists with respect to any state $\rho$, and is equal to the unitary channel $\omega\mapsto U^{\dag}\omega U$, which is the Hilbert-Schmidt adjoint $\E^{\dag}$. Furthermore, as mentioned in the introduction, a Bayesian inverse of any unital channel $\E$ with respect to the maximally mixed state $\mu$ always exists, and is also equal to the Hilbert-Schmidt adjoint $\E^{\dag}$. As such, the non-trivial cases to consider are when $\E$ is not a unitary channel and when $\rho$ is not the maximally mixed state. Therefore, the questions we now wish to address are the following:
\begin{itemize}
\item
When $\E$ is not a unitary channel and $\rho$ is not the maximally mixed state, does a Bayesian inverse of $\E$ exist with respect to $\rho$?
\item
Does there exist $\rho$ and $\E$ for which a Bayesian inverse exists which is not the Hilbert-Schmidt adjoint $\E^{\dag}$?
\end{itemize}

The following result will significantly simplify our analysis.

\bn \label{PNXS17}
Suppose $\mathcal{N}$ and $\mathcal{E}$ are unitarily equivalent quantum channels, i.e., there exists unitary channels $\mathcal{U}$ and $\mathcal{V}$ such that $\mathcal{N} = \mathcal{U}\circ \mathcal{E}\circ\mathcal{V}$, and let $\rho$ be an initial state for $\E$. Then $\F$ is a Bayesian inverse of $\E$ with respect to $\rho$ if and only if $\mathcal{M}$ is a Bayesian inverse of $\N$ with respect to $\sigma=\mathcal{V}^{\dag}(\rho)$, where $\mathcal{M} = \mathcal{V}^\dagger \circ\mathcal{F}\circ \mathcal{U}^\dagger$.
\en
\bprf
Suppose $\F$ is a Bayesian inverse of $\E$ with respect to $\rho$, so that the quantum Bayes' rule \eqref{eq:BA} holds, and let $U$ and $V$ be unitary operators such that $\mathcal{U}(\omega)=U\omega U^{\dag}$ and $\mathcal{V}(\zeta)=V\zeta V^{\dag}$. Now since $\mathcal{N} = \mathcal{U}\circ \mathcal{E}\circ\mathcal{V}$, it follows that $\mathcal{N}^{\dag} = \mathcal{V}^{\dag}\circ \mathcal{E}^{\dag}\circ\mathcal{U}^{\dag}$. Therefore, we have
\begin{align*}
\J[\N^{\dag}]&=\sum_{i,j}\dyad{i}{j}\otimes V^{\dag}\E^{\dag}(U^{\dag}\dyad{j}{i}U)V \\
&=(\mathds{1}\otimes V^{\dag})\Big(\sum_{i,j}\dyad{i}{j}\otimes \E^{\dag}(U^{\dag}\dyad{j}{i}U)\Big)(\mathds{1}\otimes V) \\
&=(\mathds{1}\otimes V^{\dag})\Big(\sum_{i,j}U\dyad{i}{j}U^{\dag}\otimes \E^{\dag}(\dyad{j}{i})\Big)(\mathds{1}\otimes V) \\
&=(U\otimes V^{\dag})\J[\E^{\dag}](U^{\dag}\otimes V)\, ,
\end{align*}
where the third equality follows from the fact that 
\[
\sum_{i,j}\dyad{i}{j}\otimes U^{\dag}\dyad{j}{i}U=\sum_{i,j}U\dyad{i}{j}U^{\dag}\otimes \dyad{j}{i}\, .
\]
Moreover, replacing $\E^{\dag}$ in the above calculation by $\F$ yields
\[
\J[\mathcal{M}]=(U\otimes V^{\dag})\J[\F](U^{\dag}\otimes V)\, ,
\]
where $\mathcal{M} = \mathcal{V}^\dagger \circ\mathcal{F}\circ \mathcal{U}^\dagger$. Setting $\chi=\{\mathcal{N}(\sigma)\otimes \mathds{1},\J[\mathcal{M}]\}$, we then have
\begin{align*}
\chi&=\big\{\mathcal{N}(\sigma)\otimes \mathds{1},\J[\mathcal{M}]\big\} \\
&=(U\otimes V^{\dag})\big\{\E(\rho)\otimes \mathds{1},\J[\F]\big\}(U^{\dag}\otimes V) \\
&\overset{\eqref{eq:BA}}=(U\otimes V^{\dag})\big\{\mathds{1}\otimes \rho,\J[\E^{\dag}]\big\}(U^{\dag}\otimes V) \\
&=\big\{\mathds{1}\otimes \sigma, \J[\N^{\dag}]\big\}\, ,
\end{align*}
thus $\mathcal{M}$ is a Bayesian inverse of $\N$ with respect to $\sigma=\mathcal{V}^{\dag}(\rho)$. The proof of the converse is similar.
\eprf

Since a unital qubit channel $\E$ is unitarily equivalent to a Pauli channel~\cite{RSW02}, Proposition~\ref{PNXS17} enables us to reduce to the problem of finding a Bayesian inverse for a unital qubit channel $\E$ to finding a Bayesian inverse for a Pauli channel. The Pauli channel $\mathcal{P}$ associated with a given probability vector $(p_0,\ldots,p_3)$ is the mixture of unitary channels given by
\be \label{PALXC67}
\mathcal{P}(\omega)=\sum_{i=0}^{3}p_i\sigma_i\omega\, \sigma_i \qquad \forall \omega\in \mathcal{L}(\H_A) \, ,
\ee
where $\{\sigma_0,\ldots,\sigma_3\}$ are the Pauli spin matrices. We note that since the Pauli matrices are Hermitian, it follows that the Hilbert-Schmidt adjoint $\P^{\dag}=\P$ for every Pauli channel $\P$.

Pauli channels represent a foundational class of open system dynamics, as they encapsulate bit-flip, phase-flip, and depolarizing processes for a single qubit~\cite{NiCh11}. Their diagonal structure with respect to the Pauli basis makes them particularly amenable to direct calculation, providing a natural setting for analyzing the existence and uniqueness of Bayesian inverses.

\section{The unscathed condition for Pauli channels}

Let $\rho$ be an initial state (i.e., a density matrix) for a Pauli channel $\mathcal{P}$ as given by \eqref{PALXC67}. The state $\rho$ will be referred to as \emph{unscathed} with respect to $\P$ if and only if 
\be \label{CondiPauli}
\P(\rho)=\sigma \rho \, \sigma
\ee
for some Pauli matrix $\sigma\in \{\sigma_0,\ldots,\sigma_3\}$. Note that if $\rho$ is unscathed with respect to $\P$, then the purity of $\rho$ is not altered in any way by $\P$. 

Since Pauli channels are unital, it follows that the maximally mixed state $\mu=\mathds{1}/2$ is necessarily unscathed with respect to every Pauli channel. Moreover, in such a case it is known that the Hilbert-Schmidt adjoint $\P^{\dag}=\P$ is a Bayesian inverse of $\P$ with respect to the unscathed state $\mu$. The following result is then a generalization of this fact. 

\begin{theorem} \label{TMX87}
Let $\rho$ be an initial state for a Pauli channel $\P$. Then the Hilbert-Schmidt adjoint $\P^\dagger=\P$ is a Bayesian inverse of $\P$ with respect to $\rho$ if and only if $\rho$ is unscathed with respect to $\mathcal{P}$.
\end{theorem}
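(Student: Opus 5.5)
The plan is to work in the Pauli (Bloch) basis, where both sides of the quantum Bayes' rule \eqref{eq:BA} become explicit. Write $\rho=\tfrac12(\mathds{1}+\sum_{j=1}^3 r_j\sigma_j)$. Since $\P$ is diagonal in the Pauli basis, $\P(\sigma_0)=\sigma_0$ and $\P(\sigma_j)=\lambda_j\sigma_j$ for $j=1,2,3$, with
\[
\lambda_j=p_0+p_j-p_k-p_l,\qquad \{j,k,l\}=\{1,2,3\}.
\]
In particular $\P(\rho)=\tfrac12(\mathds{1}+\sum_j\lambda_jr_j\sigma_j)$.

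\emph{Step 1.} Compute $\J[\P]$. Using ${\tt SWAP}=\tfrac12\sum_{k=0}^3\sigma_k\otimes\sigma_k$, we get
\[
\J[\P]=\tfrac12\Big(\mathds{1}\otimes\mathds{1}+\sum_{k=1}^3\lambda_k\,\sigma_k\otimes\sigma_k\Big).
\]
Since $\P^\dagger=\P$, the same operator appears on both sides of \eqref{eq:BA} when we test $\F=\P$.

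\emph{Step 2.} Expand both anticommutators using $\{\sigma_j,\sigma_k\}=2\delta_{jk}\mathds{1}$ for $j,k\geq 1$. The left-hand side is
\[
\J[\P]+\tfrac12\sum_j\lambda_jr_j\big(\sigma_j\otimes\mathds{1}+\lambda_j\,\mathds{1}\otimes\sigma_j\big).
\]
The right-hand side is
\[
\J[\P]+\tfrac12\sum_j r_j\big(\mathds{1}\otimes\sigma_j+\lambda_j\,\sigma_j\otimes\mathds{1}\big).
\]
The $\sigma_j\otimes\mathds{1}$ coefficients agree automatically. Comparing the $\mathds{1}\otimes\sigma_j$ coefficients, $\P$ is a Bayesian inverse of $\P$ with respect to $\rho$ if and only if
\[
(\lambda_j^2-1)\,r_j=0\quad\text{for every } j,
\]
that is, $\lambda_j=\pm1$ whenever $r_j\neq0$.

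\emph{Step 3.} Show this condition is equivalent to $\rho$ being unscathed; this is the only step needing care. Conjugation by $\sigma_m$ acts on Bloch vectors by a sign pattern $s^{(m)}\in\{\pm1\}^3$. Concretely, $s^{(0)}=(1,1,1)$, and for $m\geq1$ the pattern has $s^{(m)}_m=1$ and the other two entries equal to $-1$.

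For the forward direction, suppose $\P(\rho)=\sigma_m\rho\,\sigma_m$. Then $\lambda_jr_j=s^{(m)}_jr_j$ for each $j$, so $\lambda_j=\pm1$ on the support of $r$.

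For the converse, assume $\lambda_j=\pm1$ on the support of $r$, and split by the size of that support.
\begin{itemize}
\item If $r=0$, then $\rho=\mu$, which is unscathed.
\item If the support is a single index $j$, there are two subcases. If $\lambda_j=1$, then $p_0+p_j=1$, so $\P$ mixes only $\sigma_0$ and $\sigma_j$; both fix $\rho$, giving $\P(\rho)=\rho$. If $\lambda_j=-1$, then $p_k+p_l=1$, so $\P$ mixes only $\sigma_k$ and $\sigma_l$; both flip $r_j$, giving $\P(\rho)=\sigma_k\rho\,\sigma_k$.
\item If two indices $j\neq k$ lie in the support, then two constraints of the form $p_a+p_b=1$ hold simultaneously. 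Each pair of such constraints forces a single $p_m=1$. For example, $\lambda_1=1$ and $\lambda_2=-1$ give $p_0+p_1=1$ and $p_1+p_3=1$, hence $p_1=1$. Thus $\P$ is a single Pauli conjugation, and $\rho$ is trivially unscathed.
\end{itemize}

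The expected obstacle is bookkeeping in this last case. I will handle it uniformly by noting that $\lambda_j=\pm1$ means the probability vector is supported on a two-element subset of $\{0,1,2,3\}$. Two distinct such constraints have supports that intersect in exactly one element, which pins down a single $p_m=1$.
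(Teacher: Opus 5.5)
Your proof is correct, but it takes a different route from the paper's. The paper treats unitary $\P$ separately. For ($\Rightarrow$) it uses only the marginal consequence $(\P\circ\P)(\rho)=\rho$ of the Bayes' rule, i.e.\ $(\lambda_s^2-1)r_s=0$. It then cites the Fujiwara--Algoet conditions to conclude that two Bloch components with $|\lambda_i|=|\lambda_j|=1$ force $\P$ to be unitary, and finishes by appealing to the classification of unscathed states in Proposition~\ref{UNSCXTH71}. For ($\Leftarrow$) it walks through the cases of that classification and leaves the check of \eqref{eq:BA} as a ``straightforward calculation.''

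You instead expand the full identity \eqref{eq:BA} in the Pauli product basis once. This shows that $(\lambda_j^2-1)r_j=0$ is not just necessary but equivalent to $\P$ being its own Bayesian inverse, so the paper's marginal condition is already sufficient here. You then prove equivalence with unscathedness directly from $\lambda_j=p_0+p_j-p_k-p_l$. Your observation that $\lambda_j=\pm1$ confines the probability vector to one block of a pair-partition of $\{0,1,2,3\}$, and that blocks of distinct such partitions meet in exactly one point, is an elementary replacement for the Fujiwara--Algoet citation. It also absorbs the unitary case without special treatment.

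Your route is self-contained: it relies neither on an external result nor on Proposition~\ref{UNSCXTH71}, which the paper only sketches, and both directions fall out of one computation. The paper's route instead organizes the theorem around the explicit geometric list of unscathed states, which it reuses in Proposition~\ref{TNZXS89} and in the measure-zero discussion.
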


Before proving Theorem~\ref{TMX87}, we first classify the states $\rho$ and Pauli channels $\P$ for which $\rho$ is unscathed with respect to $\P$. For this, suppose $(p_0,\ldots,p_3)$ is the probability vector associated with $\P$ and $(r_1,r_2,r_3)$ is the Bloch vector associated with the state $\rho$, so that $\rho=\frac{1}{2}(\mathds{1}+r_1\sigma_1+r_2\sigma_2+r_3\sigma_3)$.

First, we note that the unscathed condition \eqref{CondiPauli} holds for $\sigma= \sigma_0 = \mathds{1}$ if and only if for all $i$, $j$ and $k$ such that $\{i,j,k\}=\{1,2,3\}$ we have $r_{i}(p_j+p_k)=0$. If we assume that $\rho$ is not the maximally mixed state, then $r_i\neq 0$ for some $i\in \{1,2,3\}$. The condition $r_{i}(p_j+p_k)=0$ then implies that $p_j=p_k=0$, so that the non-zero entries of the probability vector $(p_0,\ldots,p_3)$ belong to the set $\{p_0,p_i\}$. However, since we want to consider channels which are non-unitary (since unitary channels always admit a Bayesian inverse), then in such a case we necessarily have $p_0\neq 0\neq p_i$. It then follows from the above conditions that $r_j=r_k=0$, so that $\rho=\frac{1}{2}(\mathds{1}+r_i\sigma_i)$ for some $-1\leq r_i\leq 1$.

For the case when the unscathed condition \eqref{CondiPauli} holds for $\sigma \in \{\sigma_1,\sigma_2,\sigma_3\}$, we also find similar conditions on the probability vector $(p_0,\ldots,p_3)$ associated with $\P$ and the Bloch vector $(r_1,r_2,r_3)$ associated with $\rho$. We summarize all cases in the following:

\bn \label{UNSCXTH71}
Let $\P$ be a Pauli channel. Then the following statements hold.
\begin{itemize}
\item
If the probability vector $(p_0,\ldots,p_3)$ associated with $\P$ contains precisely two non-zero entries $\{p_0,p_i\}$ or $\{p_j,p_k\}$ with $\{i,j,k\} = \{1,2,3\}$, then the state $\rho=\frac{1}{2}(\mathds{1}+r_i\sigma_i)$ is unscathed with respect to $\P$ for all $-1\leq r_i\leq 1$. Conversely, such $\rho$ are the only states which are unscathed with respect to $\P$.
\item
If the probability vector $(p_0,\ldots,p_3)$ associated with $\P$ contains at least three non-zero entries, then $\rho$ is unscathed with respect to $\P$ if and only if $\rho$ is maximally mixed.
\end{itemize}
\en

We now make use of Proposition~\ref{UNSCXTH71} to give a proof of Theorem~\ref{TMX87}.

\begin{proof}[Proof of Theorem~\ref{TMX87}]
If $\P$ is a unitary channel, then there exists a Pauli matrix $\sigma$ such that $\P(\omega)=\sigma\omega\sigma$ for all $\omega\in \mathcal{L}(\H_A)$, so that every state is unscathed with respect to $\P$. Moreover, a Bayesian inverse of $\P$ exists with respect to every state, and is equal to $\P^{\dag}=\P$. Therefore, the result holds trivially for unitary $\P$. So now assume that $\P$ is not a unitary channel. 

\noindent ($\Rightarrow$) Suppose $\P^\dagger=\P$ is a Bayesian inverse of $\P$ with respect to $\rho=\frac{1}{2}(\mathds{1}+r_1\sigma_1+r_2\sigma_2+r_3\sigma_3)$, and let $\{\lambda_i\}$ be such that $\P(\sigma_i) = \lambda_i \sigma_i$ for $i\in \{1,2,3\}$. Since $\P$ is the Bayesian inverse of the process $\P$ with respect to $\rho$ we have $(\P \circ \P)(\rho) = \rho$, which implies
$(\lambda_s^2-1)r_s = 0\,,\, \forall \, s\in\{1,2,3\}$. If $\rho$ is maximally mixed, it's clear that $\rho$ is unscathed with respect to $\P$. So now consider the case where the Bloch vector $(r_1,r_2,r_3)$ of $\rho$ is non-zero. If the Bloch vector has at least two non-zero entries $r_i\neq 0\neq r_j$, then this implies $|\lambda_i^2| = |\lambda_j^2| = 1$. However, according to Fujiwara-Algoet conditions for the complete positivity of a unital qubit channel~\cite{Fujiwara99},  such conditions only hold when $\P$ is a unitary channel. Therefore, our assumption that $\P$ is non-unitary implies there must exist exactly one non-zero entry $r_i$ in the Bloch vector of $\rho$, which implies $|\lambda_i| = 1$. Furthermore, according to Proposition~\ref{UNSCXTH71}, it follows that in such a case $\rho$ is unscathed with respect to $\P$.

\noindent ($\Leftarrow$) Suppose that $\rho$ is unscathed with respect to $\mathcal{P}$. By Proposition~\ref{UNSCXTH71} we know that if the probability vector associated with $\P$ has at least three non-zero entries then the only unscathed state with respect to $\P$ is the maximally mixed state $\mu$, and in such a case the Hilbert-Schmidt adjoint is a Bayesian inverse of $\P$ with respect to $\mu$. If the probability vector $(p_0,\ldots, p_3)$ associated with $\P$ contains precisely two non-zero entries, Proposition~\ref{UNSCXTH71} reduces the unscathed states into two cases:
\begin{itemize}
    \item[(1)] If $p_0 >0$ and $p_k>0$ for a fixed $k \in \{1,2,3\}$, then the initial state, which is unscathed with respect to $\mathcal{P}$, should be of the form $\rho =\frac{1}{2} (\mathds{1} + r_k\sigma_k)$. 
    \item[(2)] If $p_0 = 0$ and $p_j,p_k>0$ for two distinct indices $j,k\in\{1,2,3\}$, then the initial state must be $\rho = \frac{1}{2}(\mathds{1} + r_l\sigma_l)$, where the index $l$ satisfies $\{j,k,l\} = \{1,2,3\}$.
\end{itemize}
In both cases, a strightforward calculation shows that the Hilbert-Schmidt adjoint $\P^\dagger=\P$ satisfies Eq.~\eqref{eq:BA}, and thus is a Bayesian inverse of $\P$ with respect to $\rho$.
\end{proof}

According to Theorem~\ref{TMX87}, the unscathed condition \eqref{CondiPauli} is a sufficient condition for the existence of a Bayesian inverse, and in such a case the Bayesian inverse is the Hilbert-Schmidt adjoint. The following result establishes when the unscathed condition is also a necessary condition for the existence of a Bayesian inverse.

\bn \label{TNZXS89}
Let $\P$ be a Pauli channel whose probability vector $(p_0,\ldots,p_3)$ contains precisely two non-zero entries. Then a Bayesian inverse of $\P$ exists with respect to a state $\rho$ if and only if $\rho$ is unscathed with respect to $\P$. 
\en  
\bprf
As the $\Leftarrow$ implication follows from Theorem~\ref{TMX87}, it suffices to prove the $\Rightarrow$ implication. We will only consider the case where the probability vector associated with $\P$ is of the form $(p,1-p,0,0)$ with $p\in(0,1)$, as the other cases follow \emph{mutatis mutandis}. So suppose there exists a Bayesian inverse of $\P$ with respect to $\rho=\frac{1}{2}(\mathds{1}+\boldsymbol{r}\cdot \boldsymbol{\sigma})$, and suppose $\boldsymbol{r}=(r_1,r_2,r_3)$. If $\boldsymbol{r} = 0$, the initial state is maximally mixed and hence unscathed with respect to $\P$. So now suppose that $\boldsymbol{r} \neq 0$. If $|r_1| = 1$, the initial state is also unscathed according to Proposition~\ref{UNSCXTH71}, so now suppose $|r_1|<1$. After some detailed calculations, we find that the assumption that a Bayesian inverse exists with respect to $\rho$ implies the inequality
\[
-16p^2(1-p)^2(r_2^2+r_3^2)\Big(1+\sum_{i=1}^3 \lambda_i^2 r_i^2\Big) \geq 0\, ,
\]
which holds if and only if $r_2 = r_3 = 0$. It then follows from Proposition~\ref{UNSCXTH71} that $\rho$ is unscathed with respect to $\P$, as desired.
\eprf

By Proposition~\ref{UNSCXTH71} we know that if $\P$ is a Pauli channel whose probability vector $(p_0,\ldots,p_3)$ contains precisely two non-zero entries, then the unscathed states with respect to $\P$ comprise one of the coordinate axes passing through the Bloch ball. It then follows from Proposition~\ref{TNZXS89} that these states are the only states for which such a Pauli channel admits a Bayesian inverse. Furthermore, Theorem~\ref{TMX87} implies that in such a case the Bayesian inverse is the Hilbert-Schmidt adjoint. We note that since the states which lie along a coordinate axis in the Bloch ball form a subset of measure-zero, it follows that for a fixed Pauli channel with exactly two non-zero entries in its probability vector, the property of admitting Bayesian inverse is highly restrictive. In particular, an initial state chosen at random will almost surely fail to satisfy the unscathed condition.

\section{Bayesian Inverses of General Pauli Channels}

Let $\P$ be a Pauli channel with probability vector $(p_0,\ldots,p_3)$, and let $\rho=\frac{1}{2}(\mathds{1}+r_1\sigma_1+r_2\sigma_2+r_3\sigma_3)$ be an initial state. The Jamio\l kowski matrix of $\P$ is then given by 
\[
\mathscr{J}[\P] = \mathds{1}\otimes \mathds{1} + \sum_{i=1}^3 \lambda_i\, \sigma_i\otimes \sigma_i\, ,
\]
where $\lambda_i\in [-1,1]$ are the eigenvalues of $\P$ associated with $\sigma_i$, so that $\P(\sigma_i)=\lambda_i\sigma_i$ for $i=1,2,3$. If $(p_0,\ldots,p_3)$ is the probability vector associated with $\P$, then $\lambda_i=p_0+p_i-p_j-p_k$, where $\{j,k\}=\{1,2,3\}\setminus \{i\}$. If there exists a $\lambda_i$ such that $|\lambda_i| = 1$, the probability vector of the Pauli channel $\P$ contains at most two non-zero entries. Such cases have been discussed in the previous section; therefore, in this section, we assume that $\lambda_i \in (-1,1)$. Now suppose $\Q$ is a Bayesian inverse of $\P$ with respect to $\rho$, and suppose $\mathscr{J}[\Q] = \sum_{i,j=0}^3 a_{ij}\, \sigma_i\otimes \sigma_j$. The quantum Bayes' rule 
\[
\big\{\P(\rho)\otimes \mathds{1},\mathscr{J}[\mathcal{Q}]\big\}=\left\{\mathds{1}\otimes\rho\, ,\mathscr{J}[\P^\dagger]\right\} 
\]
then yields 
\begin{align*}
    a_{00} & = 1\,,     &a_{i0}  &= 0\,, \qquad \qquad \quad \\
    a_{0i} & = \frac{r_i(1-\lambda_i^2)}{1-S}\,,\qquad  &a_{ij}  &= \lambda_i \delta_{ij} - \lambda_i r_i a_{0j} \, ,
\end{align*}
where $S = \sum_{i=1}^3 \lambda_i^2\, r_i^2$ and the final equation for $a_{ij}$ holds for $i,j=1,2,3$. Since $\Q$ is CPTP if and only if its Choi matrix $\mathscr{C}[\Q]$ is positive, we now derive constraints on $a_{ij}$ which ensure $\mathscr{C}[\Q]\geq 0$. For this, we note that the $\mathscr{C}[\Q]=\mathscr{J}[\Q]^{\Gamma}$, where $\Gamma$ denotes the partial transpose. 

Now let $W=(w_{ij})$ be the matrix given by $w_{ij} = \Tr\big[\mathscr{C}[\Q]\, (\sigma_i\otimes \sigma_j)\big]$, and let $\boldsymbol{v}$ be the 3-vector and $\boldsymbol{R}$ be the $3\times 3$ matrix such that 
\begin{equation}
\begin{aligned}
    W = \frac{1}{1-S} \begin{pmatrix}
    1 & \boldsymbol{v}^\dagger \\
    0 & \boldsymbol{R}
\end{pmatrix}\,.
\end{aligned}
\end{equation}
It then follows from Ref.~\cite{Gamel_2016} that $\mathscr{C}[\Q]\geq 0$ if and only if 
\begin{equation}\label{Condi}
    \begin{aligned}
     \eta(\boldsymbol{v},\boldsymbol{R}) &\leq 3\,, \\
     1-2\det(\boldsymbol{R}) - \eta(\boldsymbol{v},\boldsymbol{R})  &\geq 0\,, \\
      (\eta(\boldsymbol{v},\boldsymbol{R})-1)^2-8\det(\boldsymbol{R})  -4(\norm{\boldsymbol{Rv}}_2^2 + \norm{\text{adj}(\boldsymbol{R})}_2^2) &\geq 0\,, 
\end{aligned}
\end{equation}
where $\eta(\boldsymbol{v},\boldsymbol{R})=\norm{\boldsymbol{v}}_2^2 + \norm{\boldsymbol{R}}_2^2$, $\text{adj}(\boldsymbol{R})$ is the adjugate matrix of $\boldsymbol{R}$, and $\norm{A}_2^2 = \Tr(A^\dagger A)$ is the Schatten $2$-norm. As such, when the inequalities~\eqref{Condi} hold, a Bayesian inverse $\Q$ of $\P$ exists with respect to $\rho$, and the Kraus operators of $\Q$ may be obtained from the Choi matrix $\mathscr{C}[\Q]$. In particular, given a spectral decomposition $\mathscr{C}[\Q] = \sum_i \mu_i \dyad{\varphi_i}{\varphi_i}$, the state $\sqrt{\mu_i} \ket{\varphi_i}$ is the purification of the $i$th Kraus operator $K_i$ of the channel $\Q$.

We then arrive at the following:
\bt
A Pauli channel $\P$ admits a Bayesian inverse with respect to $\rho=\frac{1}{2}(\mathds{1}+r_1\sigma_1+r_2\sigma_2+r_3\sigma_3)$ if and only if the inequalities \eqref{Condi} hold. 
\et

We note that it follows from Theorem~\ref{TMX87} and Proposition~\ref{UNSCXTH71} that if the probability vector associated with a Pauli channel $\P$ has at least three non-zero entries, and $\rho$ is not maximally mixed, then a Bayesian inverse of $\P$ with respect to $\rho$ must necessarily differ from the Hilbert-Schmidt adjoint $\P^{\dag}=\P$. Moreover, such a Bayesian inverse will not be unital. We now give two examples which realize such a scenario. 

\begin{figure}
    \centering
    \includegraphics[width=\linewidth]{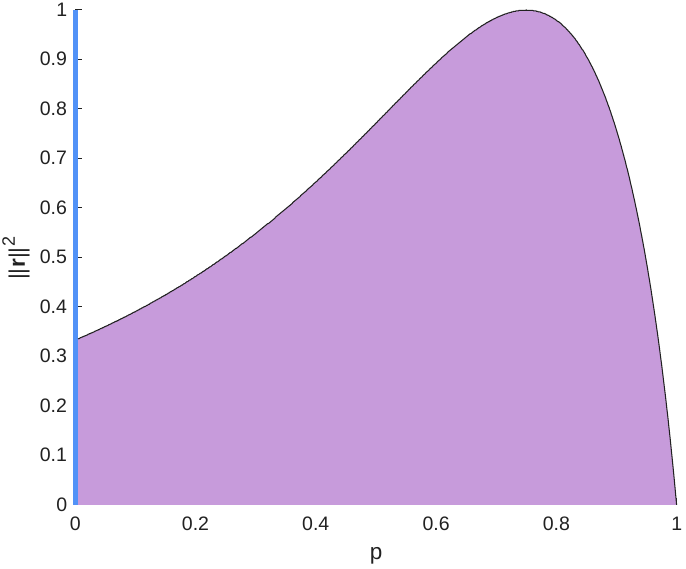}
    \caption{Region of $(p,\norm{\boldsymbol{r}}^2)$-plane for which a Bayesian inverse exists for the depolarizing channel $\mathcal{D}_{p}$ with respect to $\rho=\frac{1}{2}(\mathds{1} +\boldsymbol{r}\cdot \boldsymbol{\sigma})$. The blue line represents the trivial case corresponding to when $\mathcal{D}_{p}$ is the identity channel ($p=0$). The non-trivial cases then correspond to the area colored in violet. In particular, when $p = 0.75$ the channel $\mathcal{D}_{p}$ is completely depolarizing, and a Bayesian inverse exists for all initial states.}
    \label{depolarizing}
\end{figure}

\bx
Let $\mathcal{D}_p$ be the depolarizing channel, which is the Pauli channel given by
\[
\mathcal{D}_p(\omega) = (1-p)\omega + \frac{p}{3}\sum_{i=1}^3 \sigma_i \omega \sigma_i \qquad p\in (0,1)\, .
\]
The eigenvalues of $\mathcal{D}_p$ all coincide, and are equal to $\lambda= 1-\frac{4}{3}p \in [-\frac{1}{3},1)$. Now let $\rho = \frac{1}{2}(\mathds{1} +\boldsymbol{r}\cdot \boldsymbol{\sigma})$ be a fixed initial state, and let $t = \sum_{i=1}^3 r_i^2 = \|\boldsymbol{r}\|^2>0$. In such a case we have
\begin{align*}
     \norm{\boldsymbol{v}}_2^2 & = \frac{1}{(1-S)^2}(1-\lambda^2)^2 t \,, \\
    \norm{\boldsymbol{R}}_2^2 & = \frac{\lambda^2}{(1-S)^2}\Big[(2\lambda^4+1)t^2 - 2(2\lambda^2+1)t + 3\Big] \,, \\
    \norm{\boldsymbol{Rv}}_2^2 & = \frac{1}{(1-S)^4} \lambda^2(1-\lambda^2)^2(1-t)^2 t\,, \\
    \det(\boldsymbol{R}) & = \frac{\lambda^3(t-1)}{1-S}\,, \\
    \norm{\text{adj}(\boldsymbol{R})}_2^2 & = \frac{\lambda^4}{(1-S)^2} \Big[2(1-t)^2+(1-S)^2\Big] \,.
\end{align*}
where $S = \lambda^2 t \in [0,1)$. The inequalities~\eqref{Condi} then yield conditions for which a Bayesian inverse of $\mathcal{D}_p$ exists with respect to $\rho$. In Fig.~\ref{depolarizing}, we plot the region where the inequalities~\eqref{Condi} hold in the $(p,\norm{\boldsymbol{r}}^2)$-plane. In particular, for all $0<p<1$ there exists $\chi(p)$ with $0<\chi(p)\leq 1$ such that $\P$ is Bayesian invertible with with respect to $\rho$ for all $\boldsymbol{r}$ with $0\leq \norm{\boldsymbol{r}}^2\leq \chi(p)$. In such a case, it follows that the Bayesian inverse $\Q$ necessarily differs from the Hilbert-Schmidt adjoint $\P^{\dag}$. We note that for $p=0.75$, i.e., when $\mathcal{D}_p$ is completely depolarizing, a Bayesian inverse $\Q_{\rho}$ exists for every $\rho$, and $\Q_{\rho}$ is the discard-and-prepare channel which discards the input state and outputs the state $\rho$. 
\ex

\begin{figure}
    \centering
    \includegraphics[width=\linewidth]{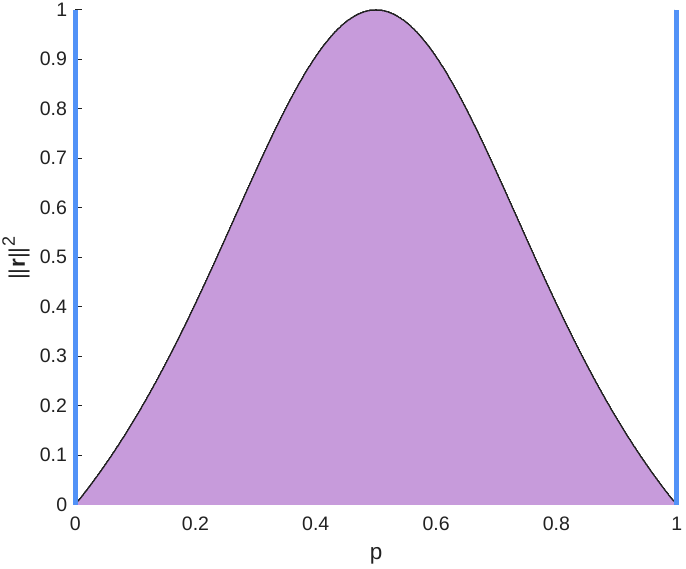}
    \caption{Region of $(p, \norm{\boldsymbol{r}}^2)$-plane for which a Bayesian inverse exists for the BB84 channel $\mathcal{E}_{p,q}$ with respect to the state $\rho=\frac{1}{2}(\mathds{1} +\boldsymbol{r}\cdot \boldsymbol{\sigma})$, where $\boldsymbol{r} = \sqrt{t/3}(1,1,1)$ and $t\in [0,1]$. The blue lines represent the trivial cases corresponding to when $\mathcal{E}_{p,q}$ is the identity channel ($p=0$) and when $\mathcal{E}_{p,q}$ is a $\sigma_2$-unitary channel ($p=1$). The non-trivial cases then correspond to the area colored in violet. In particular, when $p = 0.5$ the channel $\mathcal{E}_{p,q}$ is completely depolarizing, and a Bayesian inverse exists for all initial states.}
    \label{BB84}
\end{figure}

\bx
Let $\mathcal{E}_{p,q}$ be the BB84 channel, which is the Pauli channel given by
\[
\mathcal{E}_{p,q}(\omega) = q^2\,\omega + pq\,\sigma_1 \omega \sigma_1+ p^2\,\sigma_2 \omega \sigma_2 +pq\,\sigma_3 \omega \sigma_3\, ,
\]
where $p\in [0,1]$ and $q=1-p$. Named after the landmark quantum key distribution protocol by Bennett and Brassard~\cite{BB84}, this channel characterizes the fundamental noise encountered in QKD. In particular, it represents the maximum noise level for which Alice and Bob can asymptotically extract a secure key under a given error rate. For an initial state $\rho = \frac{1}{2}(\mathds{1}+\boldsymbol{r}\cdot \boldsymbol{\sigma})$ with a Bloch vector along the diagonal $\boldsymbol{r} = \sqrt{t/3}(1,1,1)$ and $t\in [0,1]$, the inequalities~\eqref{Condi} yield computable conditions for the existence of a Bayesian inverse of $\E_{p,q}$ with respect to $\rho$. In Fig.~\ref{BB84}, we plot the region where the inequalities~\eqref{Condi} hold in the $(p,\norm{\boldsymbol{r}}^2)$-plane. Unlike the case of depolarizing channels, the  region for which Bayesian inverses exist is symmetric about the line $p=0.5$.
\ex

\section{Concluding Remarks}

In this work, we have provided a complete characterization of the existence of Bayesian inverses for unital quantum channels acting on a single qubit. As Bayesian inverses yield an operational form of time-reversal symmetry for sequential measurements performed on an open quantum system, our results establish the precise conditions under which this symmetry is maintained for a qubit subject to unital system-environment interactions between measurements. By exploiting the unitary equivalence between unital and Pauli channels, we demonstrated that the problem of determining the existence of a Bayesian inverse for any unital qubit channel reduces to a tractable analysis of constraints on the Choi matrix of a potential inverse.

We also showed that the Hilbert-Schmidt adjoint acts as a Bayesian inverse of a Pauli channel if and only if the initial state satisfies what we refer to as the \emph{unscathed condition}, which preserves the purity of the initial state. For states that do not satisfy this condition, we have shown that Bayesian inverses can still exist, though they must necessarily differ from the Hilbert-Schmidt adjoint. In the case of depolarizing channels and the BB84 channel, we showed that Bayesian inverses exists for a set of large set of states which are not-unscathed, showing that most Bayesian inverses differ from the Hilbert-Schmidt adjoint in such cases. 

Interestingly, when the Pauli channel associated with a unital channel has precisely three non-zero entries in its probability vector, we were unable to find any examples of Bayesian inverses with respect to states which differ from the maximally mixed state. This is probably due to the fact that either the maximally mixed state is the only state for which a Bayesian inverse exists in such a case, or that such states are restricted to a measure-zero subset of the Bloch ball. Such results underscore the fragility of operational reversibility in the presence of system-environment interactions, while simultaneously highlighting the specific state-channel configurations where temporal correlations can be treated with the same mathematical rigor as spatial entanglement. 

\emph{Acknowledgments}---JF is supported by the Key Development Project of Hainan Province for the project ``Spacetime from Quantum Information", grant no. 126MS0010. 

\bibliography{references}

\end{document}